\newcommand{\nb}[1]{\color{blue}}
\newcommand{\hl}[1]{\color{magenta}}
\def\Tr{\mathop{\rm Tr}}
\newcommand\half{{\ensuremath{\frac{1}{2}}}}
\newcommand\p{\ensuremath{\partial}}
\newcommand\vev[1]{{\ensuremath{\left\langle{#1}\right\rangle}}}
\newcommand{\be}{\begin{equation}}
\newcommand{\ee}{\end{equation}}
\newcommand{\bea}{\begin{eqnarray}}
\newcommand{\eea}{\end{eqnarray}}
\newcommand{\bega}{\begin{gather}}
\newcommand{\eega}{\end{gather}}
\newcommand{\bi}{\begin{itemize}}
\newcommand{\ei}{\end{itemize}}
\newcommand{\ben}{\begin{enumerate}}
\newcommand{\een}{\end{enumerate}}
\newcommand{\bca}{\begin{cases}}
\newcommand{\eca}{\end{cases}}
\newcommand{\bln}{\begin{align}}
\newcommand{\eln}{\end{align}}
\newcommand{\bst}{\begin{split}}
\newcommand{\est}{\end{split}}
\def\ie{\begin{equation}\begin{aligned}}
\def\fe{\end{aligned}\end{equation}}
\newcommand{\bma}{\le(\begin{matrix}}
\newcommand{\ema}{\end{matrix}\ri)}
\newcommand\al{{\alpha}}
\def\b{{\beta}}
\newcommand\ep{\epsilon}
\newcommand\sig{\sigma}
\newcommand\Lam{\Lambda}
\newcommand\om{\omega}
\newcommand\ga{{\ensuremath{{\gamma}}}}
\newcommand\Ga{{\ensuremath{{\Gamma}}}}
\newcommand\de{{\ensuremath{{\delta}}}}
\newcommand\vp{\varphi}
\newcommand\ka{\kappa}
\newcommand\ze{\zeta}
\newcommand\vep{\varepsilon}
\newcommand\da{{\dagger}}
\newcommand\Th{{\Theta}}
\def\th{{\theta}}
\newcommand\ov{\over}
\newcommand\ha{{\half}}
\def\le{\left}
\def\ri{\right}
\newcommand\sC{{\ensuremath{{\mathcal C}}}}
\newcommand\sF{{\ensuremath{{\mathcal F}}}}
\newcommand\sG{{\ensuremath{{\mathcal G}}}}
\newcommand\sL{{\ensuremath{{\mathcal L}}}}
\newcommand\sO{{\ensuremath{{\mathcal O}}}}
\newcommand\sP{{\ensuremath{{\mathcal P}}}}
\newcommand\sT{{\mathcal T}}
\newcommand\vx{{\vec x}}
\newcommand\vk{{\vec k}}
\newcommand{\hmu}{{\hat \mu}}
\newcommand{\rmi}{{\rm i}}
\newcommand{\rmj}{{\rm j}}
 \global\long\def\mL{\mathcal{L}}
 \global\long\def\e{\epsilon}
\global\long\def\f#1#2{\frac{#1}{#2}}
 \global\long\def\b{\beta}
 \global\long\def\G{\Gamma}
 \global\long\def\d{\delta}
 \global\long\def\Tr{\text{Tr}}
 \global\long\def\T{\text{T}}
 \global\long\def\w{\omega}
 \global\long\def\sC{\mathscr{C}}
 \global\long\def\sF{\mathscr{F}}
 \global\long\def\sG{\mathscr{G}}
 \global\long\def\sO{\mathscr{O}}
\newtheorem*{theorem}{Theorem}
\begin{document}

\title{Ghostbusters: Unitarity and Causality of Non-equilibrium Effective Field Theories}

\preprint{MIT-CTP/4994}
\preprint{EFI-18-4}

\author{Ping Gao}
\affiliation{Center for the Fundamental Laws of Nature,
Harvard University,
Cambridge, MA 02138}

\author{Paolo Glorioso}
\affiliation{Kadanoff Center for Theoretical Physics and Enrico Fermi Institute\\
University of Chicago, Chicago, IL 60637, USA}

\author{Hong Liu}
\affiliation{Center for Theoretical Physics, \\
Massachusetts
Institute of Technology,
Cambridge, MA 02139 }

\begin{abstract}

\noindent  For a non-equilibrium physical system defined along a closed time path (CTP), a key constraint is the so-called largest time equation, which is a consequence of unitarity and implies causality. In this paper, we present a simple proof that if the propagators of a non-equilibrium effective action have the proper pole structure, the largest time equation is obeyed to all loop orders. Ghost fields and BRST symmetry are not needed. In particular, the arguments for the proof can also be used to show that if ghost fields are introduced, their contributions vanish.

\end{abstract}

\today

\maketitle

\tableofcontents

\section{Introduction}

Effective field theories (EFT) provide powerful tools for dealing with many problems in condensed matter and
particle physics.
Recently we have applied the EFT approach to  local equilibrium processes 
to find a new proof of the second law of thermodynamics~\cite{GL}, and a new formulation of fluctuating hydrodynamics~\cite{CGL,CGL1,Glorioso:2017lcn}. Essential elements of the formulation include  various constraints from unitarity and a $Z_2$ dynamical KMS symmetry which imposes micro-time-reversibility and local equilibrium. These EFTs provide a first-principle derivation of and   systematize the phenomenological Martin-Siggia-Rose-De Dominicis-Janssen~\cite{msr,Dedo,janssen1} functional integral approaches to various  stochastic
equations.


In these work it was also realized one of the unitarity constraints could potentially be violated from loop corrections.
Anticommuting ghost variables and BRST symmetry were then introduced to make sure the unitarity constraint is maintained~\cite{CGL} (see also~\cite  {Haehl:2015foa,Haehl:2015uoc,Haehl:2016pec,yarom,Jensen:2018hhx}).
Intriguingly, it can be further shown that when the BRST symmetry is combined with the $Z_2$ dynamical KMS symmetry, there is always an emergent supersymmetry~\cite{CGL,GaoL}\footnote{This extends many previous work on emergent supersymmetry in stochastic systems~\cite  {parisi,feigelman,Gozzi:1983rk,Mallick:2010su,zinnjustin}. See also~\cite{Haehl:2015foa,Haehl:2015uoc,Haehl:2016pec,yarom,Jensen:2018hhx} which used supersymmetry as an input for constructing an action principle for hydrodynamics.}.

In this paper we further clarify the fate of the unitarity constraint  under loop corrections with a new piece of information
which was not considered in~\cite{CGL}: a  set of propagators of such an EFT must be retarded. Given the retarded nature of these propagators, we will then be able to prove, with an appropriate choice of the regularization procedure, to all loop orders that: (i) Even in the absence of ghost variables, unitarity and causality are maintained; 
(ii) Integrating out the ghost action results in no contributions. 
Thus ghost variables are not needed.

The fact that the retarded structure of the propagator causes ghost diagrams to vanish is well-understood in the context of the Langevin equation, see e.g.~\cite{Arnold,Gonzalez}. The present work can be seen as the extension of such results to non-equilibrium EFTs defined on a closed time path (or in the Schwinger-Keldysh formalism).  The central relation about unitarity and causality we prove is the so-called largest time equation (LTE). This was originally formulated by Veltman and 't-Hooft for quantum field theory at zero temperature \cite{veltman,thooft}, and was generalized to finite temperature by subsequent works \cite{kobes,aurenche,Gelis:1997zv,Bedaque:1996af,Caron-Huot:2007zhp}. Our proof establishes that LTE holds for general non-equilibrium EFTs. 
A step in this direction was also taken recently in~\cite{loga}. 

In the formulation of~\cite{GL,CGL,CGL1}, the retarded nature of the propagators 
is a consequence of the $Z_2$ dynamical KMS symmetry and unitarity constraints,
and reflects the coincidence of thermodynamic and causal arrow time~\cite{GL}.
More explicitly,  it means that dissipative coefficients of the action must have the ``right'' signs--for example, friction coefficients, viscosities, conductivities must be non-negative--which ensures that on the one hand entropy increases monotonically with time,
and on the other hand the system is causal.







The plan of the paper is  as follows. In next section we review key elements of local equilibrium EFTs. In Sec.~\ref{sec:pert} we discuss the structure of the perturbative action. In Sec.~\ref{sec:LTE} we prove that the theory satisfies the LTE. In Sec.~\ref{sec:ghost} we show that the ghost contribution is zero.
In Sec.~\ref{sec:uni} we show that other unitarity constraints are also satisfied. We conclude in Sec.~\ref{sec:conc} with some general remarks.

\section{EFTs for local-equilibrium systems}  \label{sec:NEQFT}

In this section we review some essential aspects of the local equilibrium EFT formulated in~\cite{CGL,CGL1,GL}.

\subsection{Constraints on a CTP generating functional}

Consider the generating functional for a quantum statistical system defined on
a closed time path  (CTP) contour~\cite{schwinger,keldysh,Feynman:1963fq},
\bln \label{gen0}
e^{W [\phi_1, \phi_2]}  & = \Tr \le( U (+\infty, -\infty; \{\phi_{1i}\}) \rho_0 U^\da (+\infty, -\infty; \{\phi_{2i}\}) \ri) \cr
& =  \Tr \le[\rho_0 \sP \exp \le(i \int d t \, (\sO_{1i} (t) \phi_{1i} (t) - \sO_{2i} (t) \phi_{2i} (t)) \ri)\ri] \cr
& = 
 \Tr \le[\rho_0 \sP \exp \le(i \int d t \, (\sO_{ri} (t) \phi_{ai} (t)  +  \sO_{ai} (t) \phi_{ri} (t)) \ri)\ri]
\end{align}
where $\rho_0$ denotes the initial state of the system, and $U (t_2, t_1; \{\phi_i \})$ is the evolution operator of the system
from $t_1$ to $t_2$ in the presence of external sources  $\{\phi_i\}$.\footnote{The sources are assumed to have compact support in spacetime. 
}  In the above equations we have suppressed spatial dependence (and spatial integrals) for notational simplicity and will also do so below.  Note that $\sO_{1i}$ and $\sO_{2i}$
are the same operator with subscripts $1,2$  indicating only the segments of the contour they are inserted.
In the last line we also introduced the so-called $r-a$ variables
\be \label{ra1}
\phi_{ri} = \ha (\phi_{1i} + \phi_{2i}), \quad \phi_{ai} = \phi_{1i} - \phi_{2i}, \quad
\sO_{ai} = \sO_{1i} - \sO_{2i}, \quad \sO_{ri} = \ha ( \sO_{1i} + \sO_{2i})\ .
\ee
$n$-point functions of  $\sO_{ri}, \sO_{ai}$ are obtained by taking functional derivatives of the generating functional $W$,
\be \label{dfun3}
G_{\al_1 \cdots \al_n  } (t_1, \dots, t_n) \equiv {1 \ov i^{n_r}}
{\de^n W \ov \de \phi_{\bar \al_1 } (t_1) \cdots
\de \phi_{\bar \al_n  } (t_n)} \biggr|_{\phi_{a} = \phi_{r} =0} = i^{n_a} \vev{\sP \sO_{\al_1} (t_1) \cdots \sO_{\al_n} (t_n)}
\ee
where $\al_1, \cdots , \al_n \in (a, r)$ and $\bar \al = r, a$ for $\al = a, r$. $n_{r,a}$ is the number of $r$ and $a$-index in $\{\al_1, \cdots, \al_n\}$ respectively. In~\eqref{dfun3} for notational simplicity, we have suppressed indices $i$ labelling different operators
and will also do so below.

Due to the unitary nature of evolutionary operator $U$, the generating functional $W$ satisfies a number of constraints (taking $\phi_{1,2}$ to be real)
\bega \label{Wc1}
  W^* [\phi_{r}, \phi_{a}] =  W [\phi_{r}, - \phi_{a}]   \\
 \label{Wc2}
{\rm Re} \, W \leq 0 \\
 \label{Wc3}
W [\phi_{r}, \phi_{a} =0]  = 0 \
\end{gather}
which can be readily seen from~\eqref{gen0}. Equation~\eqref{Wc3} means that correlation functions involving only $\sO_a$ are identically zero, i.e.
\be \label{Wc4}
G_{a \cdots a} (t_1, \cdots t_n) = 0 \ .
\ee

In fact,~\eqref{Wc4} can be further strengthened to obtain
the LTE~\cite{veltman,thooft,kobes,aurenche,Gelis:1997zv,Bedaque:1996af,Caron-Huot:2007zhp}
\be \label{lte1}
G_{\alpha_1\cdots \alpha_n  a}(t_1,\cdots,t_n,t_{n+1})=0,\qquad \text{if}\quad t_{n+1}>t_1,\cdots,t_n \
\ee
where $\al_i, i=1, \cdots n$ can be either $r$ or $a$. LTE says that $G$ is identically zero whenever the operator with the largest time is an $a$-type operator.
Clearly~\eqref{Wc4} is a subcase of~\eqref{lte1}. To see~\eqref{lte1}, suppose $\phi_r(t)=\phi (t)$ and $\phi_a (t) =0$ for $t>t_{\rm max}$
(where $t_{\rm max} = {\rm max} (t_1,\cdots,t_n)+\e$ for small and positive $\e$). In~\eqref{gen0}, due to unitarity of $U$, the parts after $t=t_{\rm max}$ of the evolution operators will cancel between the upper and the lower branches of the contour, so that the generating functional $W$ will be independent of  the values of $\phi_r = \phi (t)$ for $t > t_{\rm max}$, which immediately implies (\ref{lte1}). {See Fig.~\ref{fig:lte}a.} Alternatively, equation~\eqref{lte1} is equivalent to the statement that for the operator with the largest time, it does not matter whether one inserts it on the upper or lower contours. {See Fig.~\ref{fig:lte}b.}

\begin{figure}
\begin{center}
\includegraphics[width=13cm]{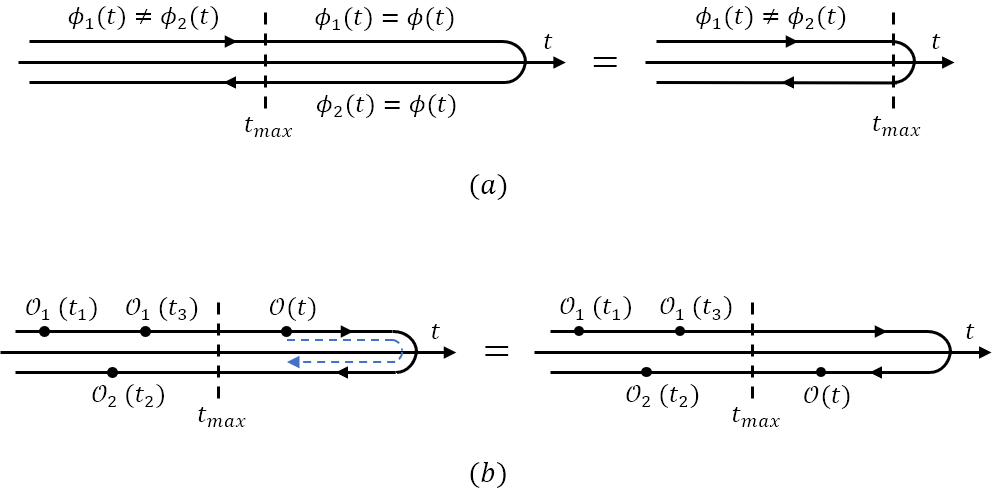}
\caption{Largest time equation. (a) From the perspective of external sources: the path integral is independent of the part where the external sources for upper and lower contours are the same;  (b) From the perspective operator insertion: for the operator with the largest time it does not matter whether one inserts it on the upper or lower contour.}\label{fig:lte}
\end{center}
\end{figure}

Equation~(\ref{lte1}) can also be seen as a statement of causality. For example, considering $n=1$, in addition to $G_{aa} =0$, we have
\be \label{onk}
G_{ra} (t_1, t_2) = 0, \quad t_2 > t_1 \ .
\ee
Recall that $G_{ra} (t_1, t_2) = G_R (t_1 - t_2)$ where $G_R$ is the retarded Green's function. So~\eqref{onk} simply says
that $G_R (t) = 0$ for $t < 0$, which is the statement that responses should come after disturbances.
Similarly for general $n$, $G_{\al_1 \cdots \al_n a}$ can be considered as the response function for $G_{\al_1 \cdots \al_n}$ with $t_{n+1}$ as the time for turning on disturbances. Thus~\eqref{lte1} again says responses cannot come before disturbances.

\subsection{General structure of local equilibrium EFTs}  \label{sec:eft}

We will now focus on local equilibrium systems for which macroscopic physical quantities and the external sources vary over spatial and time scales much larger than microscopic relaxation scales\footnote{For language and notational simplicity we will use $\ell$ to denote both relaxation time and length, which can of course be in principle independent.} $\ell$.
We can then imagine integrating out all degrees of freedom whose characteristic spacetime scales are smaller than $\ell$
and express~\eqref{gen0} as
\be
e^{W [\phi_1, \phi_2]}  =  \int_{\ell} D \chi_1 D \chi_2 \, e^{i I_{\rm eff} [\chi_1, \phi_1; \chi_2 , \phi_2; \rho_0]}   \
\label{left}
\ee
where $\chi_{1,2}$ denote the remaining ``slow'' variables  (which we will take to be real) and there are again two copies of them.
It is convenient to introduce
\be \label{rava}
\chi_r = \ha (\chi_1 + \chi_2) , \quad \chi_a = \chi_1 - \chi_2, 
\ee
where $\chi_r$ are usually interpreted as physical variables  
while $\chi_a$ as noises. 

Similarly to~\eqref{Wc1}--\eqref{Wc4}, unitarity of time evolution in~\eqref{gen0} imposes nontrivial constraints on $I_{\rm eff}$ (see e.g. Appendix A of~\cite{GL} for a derivation)
\bega \label{fer1}
  I^*_{\rm eff} [\chi_r , \phi_r; \chi_a, \phi_a] = - I_{\rm eff} [\chi_r , \phi_r; -\chi_a, -\phi_a]   \\
 \label{pos}
{\rm Im} \, I_{\rm eff} \geq 0 \\
 \label{key1}
I_{\rm eff} [\chi_r =\chi, \phi_r = \phi; \chi_a=0, \phi_a =0] = 0 \ .
\end{gather}
$ I_{\rm eff}$ is generically complex and equation~\eqref{fer1} implies that terms in $ I_{\rm eff}$ which are {\it even} in $a$-variables 
must be pure imaginary. 
 Equation~\eqref{key1} implies that any term in the action must contain at least one factor of $a$-type variables ($\phi_a$ or $\chi_a$).

 Furthermore, we require $I_{\rm eff}$ satisfy an anti-linear $Z_2$ dynamical KMS symmetry (in the absence of background sources) 
\be \label{lkms}
I_{\rm eff} [\chi_r, \chi_a] =I_{\rm eff}   [\tilde \chi_r, \tilde \chi_a]
\ee
where 
$\tilde \chi_r, \tilde \chi_a$ denote the transformed variables which in the classical limit can be written schematically as\footnote{
The explicit expressions for various theories are given in Sec.~\ref{sec:exm}.}
\be \label{dkms}
\Th  \tilde \chi_r = \chi_r + \cdots, \qquad \Th  \tilde \chi_a = \chi_a + i \Phi_r  + \cdots
\ee
where $\Th$ is a discrete spacetime reflection involving time reversal (it can be $\sT$ or $\sP \sT$ or $\sC \sP \sT$ and so on), $\Phi_r$ denote some expression of $r$-variables with a {\it single} derivative, and $\cdots$ denote $O(\hbar)$ corrections. The dynamical KMS symmetry imposes
micro-time-reversibility and local equilibrium~\cite{GL,CGL,CGL1}.  

An immediate question is whether the path integral~\eqref{left} satisfies~\eqref{Wc1}--\eqref{Wc3} and~\eqref{lte1}
given~\eqref{fer1}--\eqref{key1}.  Examination of~\eqref{left} perturbatively appeared to indicate  
that~\eqref{Wc3} could be potentially violated by loop corrections.
In~\cite{CGL} it was proposed to associate with each pair of $\chi_{r,a}$ a pair of anti-commuting ghost variables $c_{r,a}$, and require that when $\phi_a=0$, the full action be invariant under the following BRST-type symmetry
\be \label{omm}
\d\chi_{r} \equiv \e Q \chi_{r}
=\e c_{r}, \quad\d c_{a}
\equiv \e Q\chi_{a}
=\e \chi_{a}, \quad \d\chi_{a}=\d c_{r}=0 \ .
\ee
Such an action can be written as $I_{\rm eff} = QV[\phi_r,\chi_{r,a},c_{r,a}]$ for some functional $V$. The variation of generating functional with respect to source $\phi_r$ is
\be \label{W-inv-brst}
\d e^{W[\phi_a=0]}=i \int D\chi_{r,a}Dc_{r,a}\, e^{iI_{\text{eff}}} \, Q \d V= i \int D\chi_{r,a}Dc_{r,a} \, Q \left( e^{iI_{\text{eff}}} \d V\right)=0,
\ee
where in the last step, we used the fact that $Q$ is a derivative on $\chi_{r}$ and $c_{a}$. Since $W[\phi_a=0]$ is independent on $\phi_r$, we can normalize it to be zero.


 By extending the dynamical KMS transformation~\eqref{dkms} of bosonic fields to ghosts for a BRST invariant action, one
 finds that under $Z_2$, charge $Q$ is mapped to a new conserved ``mirror" charge  $\bar{Q}$, and $Q, \bar Q$ together
form a supersymmetric algebra. Furthermore,~\citep{GaoL} shows that such supersymmetric extension exists for any dynamical KMS invariant bosonic effective action. Note that the converse statement is, however, not true;
supersymmetry itself cannot guarantee the whole dynamical KMS symmetry.

In Sec.~\ref{sec:LTE} we will consider~\eqref{lte1} (and thus including~\eqref{Wc3}) with a new piece of information which was not considered earlier in~\cite{CGL}:  the propagators from $I_{\rm eff}$ between $\chi_r$ and $\chi_a$ are retarded. We will see this will ensure~\eqref{lte1}.




\section{Structure of perturbative action} \label{sec:pert}

To prepare for the proof of LTE of Sec.~\ref{sec:LTE}, in this section we discuss the general structure of the perturbative action of the EFT discussed in last section.
We first present a general discussion and then give some explicit examples.

\subsection{General discussion}

Consider expanding a local equilibrium EFT around some physical background, which solves equations of motion of $I_{\rm eff}$.
For simplicity we will set the sources $\phi_i$ to zero. The precise nature of the background solution is not important, which can be either an equilibrium or
non-equilibrium configuration. Equation~\eqref{key1} and the boundary conditions for the path integral~\eqref{left}
imply that a solution of equations of motion always has $\chi_a = 0$. This means that the perturbative action expanded around such a solution again satisfies~\eqref{key1}.

The perturbative Lagrangian density can be written schematically as
\be \label{lamn}
\sL = \sL_0 + \sL_I, \qquad \sL_0 =  \chi_{ai} K_{ij} \chi_{rj} + {i \ov 2} \chi_{ai} G_{ij} \chi_{aj}
\ee
where $\chi_{ai}, \chi_{ri}$ now denote deviations around the background solution. $\sL_0$ denotes the ``free''  Lagrangian with $K_{ij}$ and $G_{ij}$ some differential operators, 
and $\sL_I$ denotes ``interaction terms.'' We will specify the separation between ``free'' and ``interaction'' terms more explicitly below.
Note that $\sL_I$ must contain at least one factor of $\chi_a$.  The propagators following from $\sL_0$ are~(below $x = (t, \vec x)$)
\bega \label{p1}
G^R_{ij} (x) = \vev{\chi_{ri} (x) \chi_{aj} (0)} =i K^{-1}_{ij}  , \qquad  G^A_{ij} (x)= \vev{\chi_{ai} (x) \chi_{rj} (0)} =i \left(K^{-\T *}\right)_{ij} , \\\vev{\chi_{ai}(x) \chi_{aj}(0)} =0, \qquad  \vev{\chi_{ri}(x) \chi_{rj}(0)} = (K^{-1} G K^{-\T *})_{ij}
\label{p2}
\end{gather}
where $\T$ denotes switching indices $i$ and $j$, $G^R$ and $G^A$ are retarded and advanced Green's functions of $\chi$ respectively, and $K^*$ denotes the operator obtained from $K$ by integration by parts, namely changing $\p_\mu$ to $-\p_\mu$.

Physical observables can also be expanded perturbatively in
$\chi_{r,a}$.  For an observable $\sO$,  the corresponding $\sO_r$ contains terms without any $\chi_a$, while $\sO_a$ must contain at least one factor of $\chi_a$.  
Thus their expansions have the schematic form (where we have suppressed $i,j$ indices)
\be \label{opn}
\sO_r = L_0 \chi_r + L_1 \chi_a + \cdots , \qquad \sO_a = N_0 \chi_a  + \cdots
\ee
where $\cdots$ contain terms at least quadratic in $\chi$, $L_0, L_1, N_0$ are differential operators, and
each term in $\cdots$ of $\sO_a$ contains at least one $\chi_a$. To leading order in $\chi$, the retarded Green's function of $\sO$ is then given by
\be \label{mno}
\sG_R (x)= \vev{\sO_r (x) \sO_a (0} =  L_0 N_0^* G_R =  iL_0 N_0^* K^{-1}  \ .
\ee
Thus up to differential operator $L_0 N_0^*$, the retarded function for $\sO$ is given by that of $\chi$.

The general structure~\eqref{lamn}--\eqref{mno} of course also applies to the case that $\sL$ is the microscopic
Lagrangian of a system defined on a CTP.

For a system to be causal, $G_R (x)$ and $\sG_R (x)$ must be proportional to $\th (t)$, i.e. responses must come after
disturbances, which requires that in momentum space $K^{-1}$  only has poles in the lower half complex frequency plane. 
In other words, 
 the $r-a$ propagator must be proportional to $\th (t-t')$ while the $a-r$ propagator must be proportional to $\th (t'-t)$. See Fig.~\ref{fig:feyn}. %

\begin{figure}[h]
\begin{centering}
\includegraphics[width=16cm]{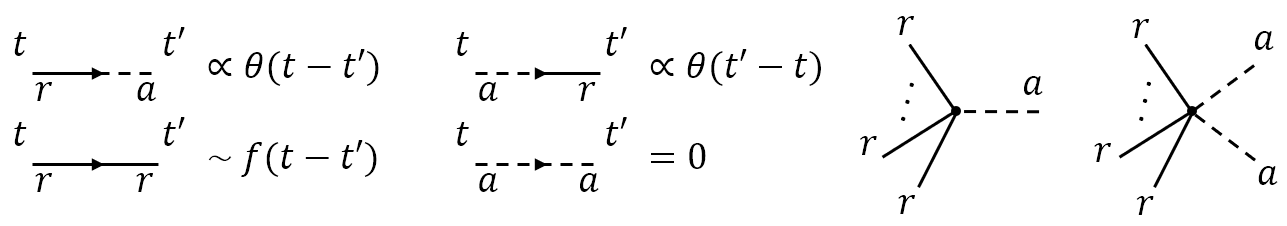}
\end{centering}
\caption{Schematic Feynman rules for~\eqref{lamn}. Dashed lines correspond to $\chi_a$'s,
solid lines correspond to $\chi_r$'s.
The $r-a$ propagator should be proportional to $\th (t-t')$ while the $a-r$ propagator should be proportional to $\th (t'-t)$. $r-r$ propagator (with some function $f$) is non-vanishing for either orderings of $t$ and $t'$. All the interacting vertices contain at least one $a$-leg. }\label{fig:feyn}
\end{figure}

Let us make some further general remarks:



\ben

\item For the case  that $\sL$ is the microscopic
Lagrangian of a system, which in general does not have any dissipative terms, the retarded nature of $K^{-1}$ should follow from appropriate $i \ep$ prescription.

\item For an effective field theory written in a derivative expansion, we include in $\sL_0$ only leading quadratic terms in the expansion. Higher derivative quadratic terms (which are suppressed by UV cutoffs compared with leading terms) as well as
nonlinear terms are in $\sL_I$.



It can be shown explicitly that the poles from leading derivative terms lie on the lower half frequency plane 
as a consequence of the $Z_2$ dynamical KMS symmetry and unitarity~\cite{GL}, as will be reviewed  in next subsection.

\item In some situations, one may be able to extend the validity of an EFT to some higher cutoff scales so that one can treat
$\sL$ non-perturbatively in derivatives. In this case one should include all quadratic terms in $\sL_0$ with
 $K, G$ interpreted as nonlocal kernels\footnote{It may also happen they only contain a finite number derivatives due to other suppressions or symmetries.}. An example is~\cite{blake}.
 In such a case,  $K^{-1}$ may have an infinite number of poles.
  While we do not have a rigorous proof, it should follow from the non-perturbative  proof of the second law of thermodynamics in~\cite{GL} and the assumption of coincidence of thermodynamic and causal arrow of time, that
the $Z_2$ dynamical KMS symmetry and unitarity are again enough to ensure the poles of $K^{-1}$ to lie in the lower half frequency plane.

An interesting caveat is the recent discussion in~\cite{blake} for quantum chaotic theories. There from a shift symmetry in the effective action, $K^{-1}$ can have a pole in upper half $\om$ plane  which gives rises to the exponential Lyapunov growth, but $\sG_R$ for physical observables such as energy density and energy fluxes have only poles in the lower half frequency plane. Even in that case as discussed in~\cite{blake},  one should deform the integration contour in Fourier transform so that $G_R (x)$ is still retarded, i.e. proportional to $\th (t)$.

\een
For the rest of the paper we will focus on the situations of items $1$ and $2$ above, and take that $K_{ij}^{-1} $ only has poles in lower half frequency plane.

Loop integrals will typically involve divergences in both frequency and spatial momentum integrations, and
regularizations are needed. For spatial momentum integrations we can use the standard dimensional regularization.
For frequency integrals in order to keep the retarded structure we will use the following
regularization 
\be \label{mm}
K_{ij}(\omega,\vec q)\to K_{ij}(\omega,\vec q)\left(\frac{\Lambda-i\omega}\Lambda\right)^p,
\ee
where $\Lam$ is the UV energy cutoff and $p$ is an appropriately chosen integer.  Clearly the regulated propagator still only has poles in the lower half frequency plane. Note that $G_{ij}$ as well as $L_0,L_1,\dots$ should be also be correspondingly modified so as to satisfy the dynamical KMS symmetry~~\eqref{lkms}--\eqref{dkms}. 
A sharp cutoff in frequency integrals will not be compatible with the dynamical KMS symmetry as~\eqref{dkms} involves time derivatives. Dimensional regularization for frequency integrals is incompatible with retarded nature of $K$.



In our discussion of subsequent sections we will need to use the $a-r$ propagator $G^A_{ij}$ in coordinate space with various possible derivatives (from interacting vertices) acting on it. More explicitly,  with~\eqref{mm} 
\be\begin{split}\label{areq1}
\p_{\mu_1}\cdots \p_{\mu_n} G_{ij}^A (t, \vec x) &= i^{n+1} \int \frac{d\omega d^{d-1}k}{(2\pi)^d}\, e^{-i\omega t+i\vec k\cdot\vec x}
K^{-1}_{ji} (-\om, - \vec k) k_{\mu_1} \cdots k_{\mu_n}
\left(\frac \Lambda{\Lambda + i\omega}\right)^p \
\end{split}
\ee
Since $K^{-1} (-\om, -\vk)$ has poles only in the upper complex $\om$-plane and so is the regulator, the above expression is proportional to $\th (-t)$ for $t > 0$ as one can close the $\om$ integration contour in the lower half complex $\om$-plane. At $t=0$, since $K_{ij}$ are polynomials in derivatives, for any fixed number $n$ derivatives, we can always choose a large enough integer $p$ so that the frequency integral in~\eqref{areq1} is well defined for $t =0$.  We can then close the path of the $\omega$-integral along the lower semicircle in the complex plane, which again gives zero. It then follows that with our choice of regularization
\be \label{rap}
\p_{\mu_1}\cdots \p_{\mu_n} G_{ij}^A (x) = \th (-t) F_{\mu_1 \cdots \mu_n ij} (x)
\ee
with
\be \label{rap1}
F_{\mu_1 \cdots \mu_n ij} (t= 0, \vx) = 0 \ .
\ee
Note that the above expression is valid for the order of limits of  taking $t \to 0$ first and then $\Lam \to \infty$ (and not the other way around). Parallel statements can be made about the $r-a$ propagator with $\th (-t)$ above replaced by $\th (t)$.



\subsection{Some explicit examples} \label{sec:exm}

Now we discuss some explicit examples of local equilibrium EFTs. Such systems can be separated into three classes: (i)  without conserved quantities; (ii) with conserved quantities; (iii) with both conserved and non-conserved quantities. To understand the pole structure for $K^{-1}$ of~\eqref{lamn} it is enough to examine a representative example in the first two classes as theories in the third class  have the same quadratic propagators as those in (i) and (ii).
For simplicity we will consider the classical limit and turn off the external sources.

\subsubsection{Brownian motion}\label{sec:br}

As the simplest example, let us consider Brownian motion of a heavy particle in a thermal medium with inverse temperature $\b_0$. The effective Lagrangian of the
particle can be written as
\be \label{Leff-br}
\sL_{\rm eff} = x_a (\p_t^2 + \nu \p_t ) x - x_a F (x)  +  {i \ka \ov 2}  x_a^2
\ee
where $\nu$ is the friction coefficient, $\ka$ controls the fluctuation of noise $x_a$, and $F(x)$ is a conservative force acting on the
particle. The dynamical KMS transformation can be written as
\be \label{km1}
\tilde x(t) = x(-t), \qquad  x_{a } (- t) = x_{a} (t) + i \beta_0  \p_t x(t) \ .
\ee
Invariance under~\eqref{km1} requires the Einstein relation
\be \label{ein}
\nu = {\b_0 \ov 2} \ka
\ee
while~\eqref{pos} requires that $\ka \geq 0$. We thus conclude that the friction coefficient $\nu$ is non-negative. We can further expand $F$ in series of $x$ as\footnote{Here we can shift $x$ by a constant to get rid of constant term in $F(x)$}
\be
F(x)=f_1 x+f_2 x^2+\cdots
\ee
where $f_i$ are constants. Above perturbative expansion holds only when $x=0$ is a stable solution. It follows that $f_1<0$ because $F$ must be a restoring force around $x=0$. The quadratic order of \eqref{Leff-br} is
\be
\sL_{0} = x_a (\p_t^2 + \nu \p_t -f_1) x - x_a F (x)  +  {i \ka \ov 2}  x_a^2
\ee
which implies that the retarded propagator in frequency space is
\be\label{brpr}
G_R(\w)=iK^{-1}(\w)=\f{-i}{\w^2+i\nu\w+f_1}
\ee
which has two poles of $\w$ at
\be
\w_{\text{pole}}=-\f{i\nu}{2}(1\pm \sqrt{1+4f_1/\nu^2})
\ee
It is clear that these two poles are both in lower half plane since $\nu$ is nonnegative.

\subsubsection{Model A}

Our next example concerns  the critical dynamics of a $n$-component real order parameter~(not conserved) $\chi_\rmi, \rmi=1,\cdots , n$ at some inverse temperature $\beta_0$ in a $d$-dimensional spacetime (i.e. model A~\cite{hohenberg,Folk}).  

The dynamical KMS transformations~\eqref{dkms} have the form (taking $\Th = \sP \sT$)
\bega \label{ckms1}
\tilde \chi_{r \rmi} (x)  = \chi_{r  \rmi}(-x) , \qquad \tilde \chi_{a \rmi} (- x) = \chi_{a \rmi} (x) +{ i \beta_0  } \p_t \chi_{r \rmi} (x) \ 
\end{gather}
and the effective Lagrangian which satisfies~\eqref{fer1}--\eqref{lkms} and is invariant under~\eqref{ckms1} can be written as\footnote{As~\eqref{ckms1} only involves time derivative we can treat time and spatial derivative expansions separately.}
\be \label{lag1}
\sL_{\rm eff}=\le(- {\de \sF \ov \de \chi_{r \rmi}}  -  \beta_0 f^{\rmi \rmj}  \p_t \chi_{r \rmj} \ri) \chi_{a \rmi}
+ {i } f^{\rmi \rmj} (\chi_r)   \chi_{a \rmi}   \chi_{a \rmj }  +  \cdots \ .
\ee
In~\eqref{lag1}, $\sF (t; \chi_r]$ is a local functional of the form
\be \label{ii1}
\sF (t; \chi_r] = \int d^{d-1} \vec x \, F (\chi_r (x), \p_i \chi_r (x), \cdots)
\ee
with $F$ an arbitrary function of $\chi_r$ and their spatial derivatives, and
$f^{\rmi \rmj} = f^{\rmj \rmi}$ are functions of $\chi_r$ satisfying, for arbitrary $a_\rmi (x) $,
\be \label{posi1}
f^{\rmi \rmj} (\chi_r)  a_\rmi (x) a_\rmj (x) \geq 0 \ .
\ee

Let us now consider a phase whose equilibrium configuration has $\chi_{r \rmi} =0$.
Keeping only quadratic terms in~\eqref{lag1}, we can write
\be
f^{\rmi \rmj} = {1 \ov \Ga_0 } \de^{\rmi \rmj}  , \qquad
F =  \ha r \chi_{r \rmi}^2 + \ha  (\p_i \chi_{r\rmi})^2 + \cdots
\ee
where we have only kept two spatial derivatives in $F$.  $\Ga_0$ should be non-negative due to~\eqref{posi1}.
$F$ (and thus the constant $r$) should also be non-negative to ensure thermodynamic stability. 
At quadratic order,~\eqref{lag1} can then be written as
\be
\sL_0 = \le( - r \chi_{r \rmi} +  \p_i^2 \chi_{r \rmi}  -  {\beta_0 \ov \Ga_0}  \p_0 \chi_{r \rmj}  \ri) \chi_{a \rmi}
+ {i \ov \Ga_0 }  \chi_{a \rmi}   \chi_{a \rmi} + \cdots  \ . 
\ee
which implies that the retarded propagator in Fourier space is
\be
G_{R,\rmi\rmj}(\w,\vec{k})=iK^{-1}_{\rmi\rmj}(\w,\vec{k})=\f{i\d_{\rmi\rmj}}{-r-k^2+i\b_0\w/\G_0}
\ee
which has pole of $\w$ at $-i\G_0(r+k^2)/\b_0$ in the lower half plane.

\subsubsection{Fluctuating hydrodynamics for relativistic charged fluids}\label{sec:fl}

Our last example is the fluctuating hydrodynamics of a relativistic charged fluid in a $d$-dimensional spacetime.
This example is a bit complicated. We will only present the final result. More details can be found~\cite{CGL1}.

Here the dynamical variables are hydrodynamical modes associated with the stress tensor and a conserved
$U(1)$ current. The $r$-variables can be chosen to be local velocity $u^\mu (x)$, local inverse temperature $\beta (x)$,
and local chemical potential $\mu (x)$, organized into a ``big'' vector
$\beta_M = (\beta_\mu, \hat \mu)$, with $\beta^\mu = \beta (x) u^\mu (x)$ and $\hmu = \beta (x) \mu (x)$.
The noise variables can similarly be organized into a ``big'' vector $X_{a M} = (X_{a \mu}, \vp_a)$ with $X_a^\mu$ a spacetime vector and $\vp_a$ a scalar.
In the classical limit, the dynamical KMS transformation~\eqref{dkms} can be written as (with $\Th = \sP \sT$)
\be \label{oun}
\tilde \b_M (- x) = \b_M (x), \qquad  \p_\mu \tilde X_{a M} (-x) = \p_\mu \tilde X_{a M} (x)
+i \p_\mu \beta_M (x) \ .
\ee

Near equilibrium, the infinitesimal deviations from equilibrium values of these variables can be written as
\be
\chi_r^M=(\tau,u^i,\delta\mu), \quad \chi_a^M = (X_a^0, X_a^i, \vp_a), \quad M = 0,1,\cdots d-1, d , \quad
i=1, \cdots, d-1
\ee
where $\tau$ and $\delta\mu$ are defined through $\beta(x)^{-1}=T_0(1-\tau(x))$, $\mu(x)=\mu_0+\delta\mu(x)$,
with $T_0, \mu_0$ the equilibrium temperature and chemical potential. $u^i$ are the spatial components of
the (infinitesimal) velocity field (with $u^0 =1$). It is convenient to write the kernels $K_{MN}$ and $G_{MN}$ as
defined in~\eqref{lamn} in momentum space. One finds that $\xi_{r,a}^M$ factorize into two decoupled sectors:
the scalar sector $\xi_{r,a}^A=(\xi_{r,a}^0,\xi_{r,a}^z,\xi_{r,a}^d)$ and vector sector $\xi_{r,a}^\al$, with $\al=i \neq z$, where we have taken the spatial momentum $\vk$ to be along the $z$-direction.

After imposing the dynamical KMS symmetry, we find that, to lowest order in derivatives, for the vector sector
\be \label{nm}
K_{\al \b} = (i\omega T_0\p_T p_0-\eta k^2) \de_{\al \beta}, \qquad G_{\al \beta}= 2 T_0\eta k^2 \de_{\al \beta}
\ee
where $k = |\vk|$, $\eta$ is the shear viscosity, and $p_0$ is the equilibrium pressure density. For the scalar sector we have
\be
\begin{gathered}
K_{AB}=\begin{pmatrix} i\omega T_0^2\p_T^2 p_0 -\mu_0^2\sigma k^2& -ik T_0\p_T p_0 & i\omega T_0 \p_\mu \p_T p_0 +\mu_0\sigma k^2\\
-i k T_0 \p_T p_0 & i\omega T_0 \p_T p_0-\left(\zeta+2\frac{d-2}{d-1}\eta\right) k^2  & -ik \p_\mu p_0 \\
i\omega T_0 \p_T \p_\mu p_0 +\mu_0 \sigma k^2& -ik \p_\mu p_0 & i\omega  \p_\mu^2 p_0- \sigma k^2
\end{pmatrix}\\
G_{AB}=\begin{pmatrix} 2 T_0\mu_0 ^2\sigma k^2 & 0 & -2 T_0 \mu_0\sigma k^2\\
0& 2 T_0\left(\zeta+2\frac{d-2}{d-1}\eta\right)k^2 &0\\
-2 T_0\mu_0 \sigma k^2&0&2 T_0\sigma k^2
\end{pmatrix}\
\end{gathered}
\label{nm1}
\ee
where $\ze$ and $\sig$ are respectively bulk viscosity and conductivity. Equation~\eqref{pos} requires that
\be \label{jhn}
\sig \geq 0, \qquad \eta \geq 0, \qquad \ze \geq 0 \ .
\ee
It can be readily checked that for a thermodynamically stable system equations~\eqref{jhn} ensures all the poles of $K_{MN}^{-1}$ are indeed in the lower half $\om$-plane~\cite{CGL}.

\subsection{A simple illustration}

As a simple illustration, here we show that for the Brownian motion example of Sec.~\ref{sec:br}, two-point function of
$x_a$ vanishes at one-loop. We choose the potential term in (\ref{Leff-br}) to be $F(x)=f_1 x+f_2 x^2$. The Feynman rules are summarized in figure \ref{fig:feyn1b}.
\begin{figure}[!ht]
\centering
\includegraphics[width=11cm]{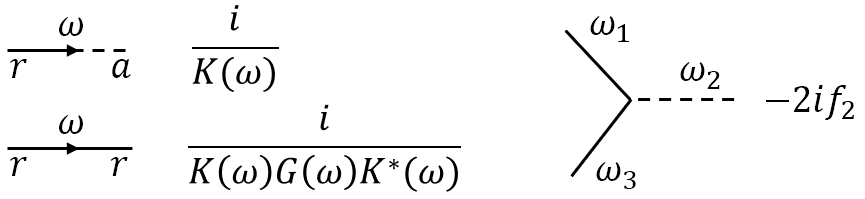}
\caption{Feynman rules for~\eqref{Leff-br}. Dashed lines correspond to $ x_a$, solid lines correspond to $ x$.}
\label{fig:feyn1b}
\end{figure}

\begin{figure}[!ht]
\centering
\includegraphics[scale=.8]{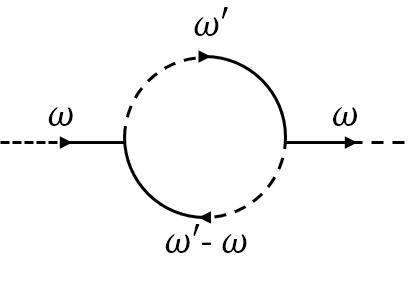}
\caption{One-loop contribution to the two-point function of $\protect x_a$.}\label{fig:1loop}
\end{figure}

The only one-loop contribution to $\vev{x_a (t) x_a (0)}$ comes from figure \ref{fig:1loop}, and gives
\be\label{loop1}
\langle x_a(-\omega)x_a(\omega)\rangle =
\frac i{K(\omega)}\frac i{K(-\omega)}\left[-2 f_2^2\int\frac{d\omega'}{2\pi}\frac i {K(\omega')}\frac i{K(\omega'-\omega)}\right]\ .\ee
From~(\ref{brpr}), the integrand in the square brackets of (\ref{loop1}) has four poles, all of which lie in the lower half-plane. Performing the integral by closing the contour in the upper half-plane, the integral vanishes. We then see that condition (\ref{Wc4}) is satisfied thanks to the analytic structure of the tree-level propagators.\footnote{For more examples of how condition (\ref{Wc4}) is preserved by loop corrections, see e.g. \cite{Kamenev,Arnold,Gonzalez}.}

\section{Proof of largest time equation } \label{sec:LTE}

In this section we shall prove that, for effective actions discussed in Sec.~\ref{sec:pert}, the LTE~(\ref{lte1}) is satisfied to all orders in perturbation theory.


\begin{theorem}
Given any Lagrangian of the form~\eqref{lamn} with a retarded $K_{ij}$, the largest time equation~\eqref{lte1} is satisfied for any operators of the form~\eqref{opn}.
\end{theorem}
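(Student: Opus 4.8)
The plan is to work diagrammatically in position space and show that every Feynman diagram contributing to $G_{\alpha_1\cdots\alpha_n a}(t_1,\dots,t_n,t_{n+1})$ vanishes when $t_{n+1}>t_i$ for all $i=1,\dots,n$. Using the expansions~\eqref{opn}, each external leg is attached to a string of fields through differential operators, and by~\eqref{mno} the statement reduces (up to the differential operators $L_0,N_0^*$ acting on external points, which commute with the argument) to the same claim for correlators built purely from the elementary fields $\chi_{r}$, $\chi_a$ with propagators~\eqref{p1}--\eqref{p2}. So first I would reduce to showing: any connected amputated-at-external-legs diagram with one external $a$-leg at the latest time $t_{n+1}$, all internal vertices of the form~\eqref{lamn} (hence each carrying at least one $\chi_a$), and propagators that are $r$-$a$ (retarded, $\propto\theta(t-t')$), $a$-$r$ (advanced, $\propto\theta(t'-t)$), or $r$-$r$, integrated over all internal vertex times, is zero.

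The core argument is a "flow of time along $a$-lines" inductive argument. Consider the external $a$-leg sitting at the maximal time $t_{n+1}$; it enters the diagram at some vertex $v_1$. That vertex must have at least one further $a$-leg besides the one just used (actually I should be careful: the external insertion of $\sO_a$ contributes a $\chi_a$, so $v_1$'s own $\chi_a$ requirement could be met by it — I need to track this). The key structural fact I would exploit: because the $r$-$a$ propagator is retarded and the $a$-$r$ propagator is advanced, a $\chi_a$ at vertex $v$ can only connect to a $\chi_r$ at a vertex $v'$ with $t_{v'}\le t_v$ (via $G^A$), and there is \emph{no} $a$-$a$ propagator. Meanwhile the external $a$-leg at $t_{n+1}$, when contracted with a $\chi_r$ inside the diagram, forces that $\chi_r$'s vertex to have time $\le t_{n+1}$ — which is automatic. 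The real leverage comes from integrating the \emph{latest internal vertex time}: I would argue that one can always order the $\omega$-contour integrations (or, in position space, do the time integral over the internal vertex carrying the "most recent" $a$-line) so that all relevant propagators have their poles in a single half-plane, closing the contour on the empty side. Concretely, following the chain of $\chi_a$'s: pick the vertex $v^\star$ adjacent to the external $a$-leg; all propagators attached to $v^\star$ that come from a $\chi_a$ at $v^\star$ are $G^A$'s to earlier-or-equal times, all propagators into $v^\star$ from a $\chi_a$ elsewhere are $G^R$'s; doing the $t_{v^\star}$ integral with the constraint $t_{v^\star}\le t_{n+1}$ (from the external retarded leg) against these $\theta$-functions collapses the support, and the boundary term vanishes by~\eqref{rap1}. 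Iterating inward, the "$a$-flow" is pushed toward ever-later vertices with no room to go, and the whole diagram integrates to zero.

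The cleanest packaging, which I would actually adopt, is in frequency space using the regulator~\eqref{mm}: assign to the external $a$-leg the frequency $\omega_{n+1}$, route the other external frequencies in, and observe that in any diagram the subintegral over the internal frequencies flowing through the sub-network of $a$-lines emanating from the external $a$-insertion has all its poles (from the $K^{-1}(-\omega,-\vec k)$ factors of the $a$-$r$ propagators \emph{and} from the regulator factors $(\Lambda/(\Lambda+i\omega))^p$) strictly in the upper half-plane in the appropriate linear combinations of frequencies, by virtue of $K^{-1}$ having poles only in the lower half-plane — exactly as in the one-loop illustration around~\eqref{loop1}. Closing those contours in the lower half-plane gives zero, and the regulator~\eqref{mm}, together with~\eqref{rap}--\eqref{rap1}, is precisely what guarantees the arcs at infinity and the $t\to0$ endpoints do not spoil this. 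I would then note that the differential operators $L_0,L_1,N_0$ and the extra $\chi_a$'s in the $\cdots$ of~\eqref{opn} only insert polynomial-in-frequency factors and additional retarded/advanced lines, neither of which introduces new poles in the lower half-plane relevant to the closed contour, so the conclusion is unaffected.

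The main obstacle, and where I would spend the most care, is the combinatorial/topological step of showing that for \emph{every} diagram there exists a consistent choice of loop-momentum routing such that the relevant sub-network of frequencies can be integrated with all poles on one side — i.e., that the "$a$-lines form a forest oriented by time away from the external $a$-leg" in a way compatible with the loop structure. One must rule out the possibility that a loop made partly of $r$-$r$ lines reintroduces lower-half-plane poles into the integration one wants to close downward; the resolution is that $r$-$r$ propagators $\propto f(t-t')$ are not attached to the $a$-flow in a way that matters (they carry no $\theta$-function obstruction and their poles can be kept away from the contour being closed, or more simply, one peels off the $a$-flow one vertex at a time so that at each stage only one new internal time integral is performed and only $G^R,G^A$ attached to that vertex enter). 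I expect the inductive "peel the latest $a$-vertex" formulation to be the robust one, with~\eqref{rap1} doing the essential work at coincident times, and the frequency-space picture serving as the clean check at each step.
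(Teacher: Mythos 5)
Your proposal assembles the right ingredients --- no $a$-$a$ propagator, every vertex carrying at least one $a$-leg, the retarded/advanced structure of the mixed propagators, and the role of the regulator~\eqref{mm} and of~\eqref{rap1} at coincident times --- but two things prevent it from being a proof. First, the time ordering you assign to the $a$-$r$ propagator is backwards: from~\eqref{p1} and~\eqref{rap}, $\vev{\chi_a(t)\chi_r(t')}\propto\th(t'-t)$, so a $\chi_a$ at a vertex $v$ connects to a $\chi_r$ at a vertex $v'$ with $t_{v'}\geq t_v$, not $t_{v'}\leq t_v$ as you state. Consequently the constraint coming from the external $a$-leg at $t_{n+1}$ is not ``automatic'' --- it is the engine of the whole argument: starting from the external $\chi_a$ at the largest time and repeatedly exiting each vertex reached along one of its $a$-legs, one generates a chain of $a$-$r$ propagators along which the vertex times are forced to be \emph{non-decreasing}, beginning at $t_{n+1}$, which is already maximal.

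Second, you never close the argument. The chain just described, being a walk on a finite graph, must terminate in one of three ways: it revisits an internal vertex (producing a cyclic product $\th(w_2-w_1)\cdots\th(w_1-w_m)$ whose support is the coincident-time locus, where the diagram vanishes by~\eqref{rap1}); it returns to the external vertex at $t_{n+1}$ (same conclusion); or it lands on another external vertex at $t_k$, forcing $t_k\geq t_{n+1}$ and contradicting maximality. That finite case analysis is the entire proof, and it requires no frequency routing at all --- the $r$-$r$ propagators never enter the chain, because one always leaves a vertex along an $a$-leg. By instead packaging the argument as a frequency-space contour deformation over the ``sub-network of $a$-lines,'' you run into exactly the combinatorial routing problem you flag as your ``main obstacle,'' and you leave it unresolved; that admitted gap is a genuine missing step, not a technicality. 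The position-space $\th$-function chain, with~\eqref{rap1} disposing of the degenerate coincident-time configurations, is what fills it.
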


\begin{proof}[Proof:]

We will use the regulator~\eqref{mm} to cut off frequency integrals, which ensures~\eqref{rap} for an $a-r$ propagator, and the parallel statement for a $r-a$ propagator. For a theory with a finite number of types of interaction vertices, we can choose $p$ of~\eqref{mm} to be greater than $2 m_{\rm max}$, where $m_{\rm max}$ denotes the largest number of time derivatives that a vertex
can have.\footnote{When introducing the regulator (\ref{mm}), one may need to modify the interacting part of the Lagrangian $\mathcal L_I$ as well, when there is a nonlinear symmetry. An example is the dynamical KMS symmetry~\eqref{oun}.
In such a case, one should expand the regulator in $\sL_I$ in derivatives and keep the total number of time derivatives to be
$m_{\rm max}$. This guarantees that to such derivative order nonlinear symmetries are preserved.} For such a $p$,~\eqref{rap} is then satisfied for all loop diagrams.

In~\eqref{lte1} the operator with the largest time is an $a$-type operator which contains at least one factor of $\chi_a$.
Start with an external $\chi_a$, which can only connect through an $a-r$ propagator, the other end of the propagator
must be a $r$-leg of a vertex. Since all vertices must contain at least one $a$-type leg, let us continue along an $a$-leg
which in turn must connect with a $r$-leg through an $a-r$ propagator.
Repeating this procedure we will end up with the following possibilities:

\ben

\item We run into an internal vertex that we already encountered. In this case we have an internal loop whose every propagator is of $a-r$ type, see Fig.~\ref{fig:plot}(a). Consider the internal $a-r$ loop, then from~\eqref{rap} we will have a product of the form
\be\label{ww1}
\th (w_2 - w_1) \th (w_3 - w_2) \cdots \th (w_1 -w_m) W
\ee
with $W$ denoting products of functions $F$ as in~\eqref{rap} associated with each propagator.
The whole diagram is thus zero unless $w_1 = w_2 = \cdots = w_m$, but in this latter case from~\eqref{rap1} $W$ is zero
due to that the $F$'s vanish for coincidental times.

\item We come back to the same external vertex we started with through a $r$-leg, see Fig.~\ref{fig:plot}(b). In this case we have a product of the form
\be\label{ww2}
\th (w_1 - t_{n+1}) \th (w_2 - w_1)   \cdots \th (t_{n+1}- w_m) W  \ .
\ee
Following the same reasoning as that for the previous case we conclude that the contribution is identically zero.

\item We end up at another external vertex, see Fig.~\ref{fig:plot}(c). In this case we have
\be\label{ww3}
\th (w_1 - t_{n+1}) \th (w_2 - w_1)  \cdots \th (t_k - w_m) W , \quad {\rm with} \quad k \in \{1, 2, \cdots, n\} \ .
\ee
which requires $t_k \geq t_{n+1}$ which is contradictory with $t_{n+1}$ being the largest time. Thus it is also identically zero.

\een
Thus we find in all cases the correlation function is identically zero.
\end{proof}

\begin{figure}[!ht]
\begin{centering}
\includegraphics[width=5cm]{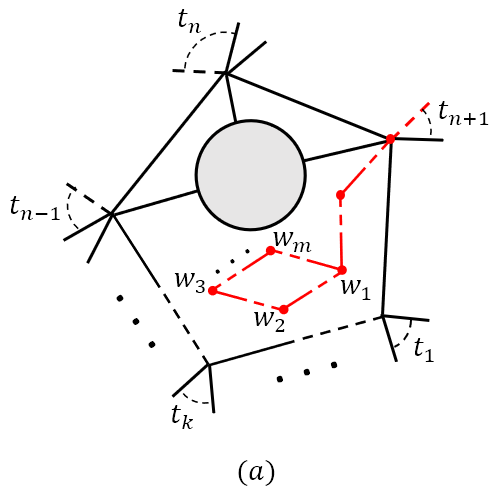} \includegraphics[width=5cm]{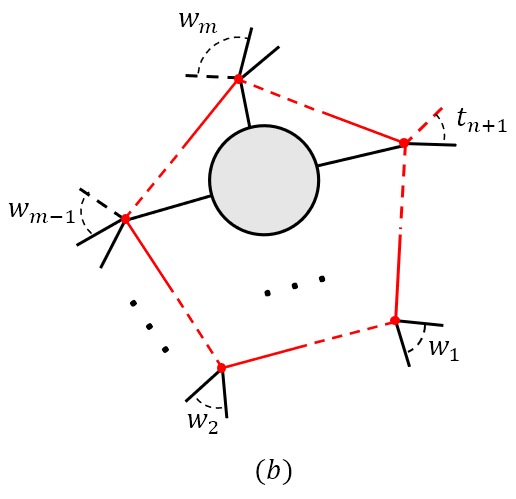}
\includegraphics[width=5cm]{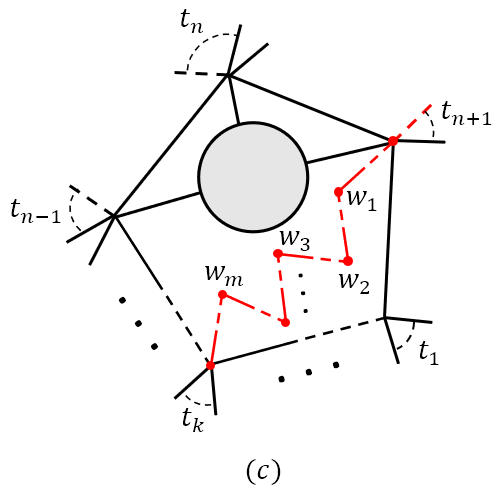}
\caption{Various possibilities: (a) corresponds to the existence of an internal $a$-$r$ loop, which is described by~\eqref{ww1}.
In (b), there is an $a$-$r$ loop which starts and ends at the operator with the largest time, which is described by~\eqref{ww2}.
In (c) there is an $a$-$r$ line which connects the operator with the largest time with another operator, described by~\eqref{ww3}. }\label{fig:plot}
\end{centering}
\end{figure}

We note that the essence of the above proof is closely related to the all-loop proof of the normalization condition (\ref{Wc4})
in a class of open quantum field theory in~\cite{loga}, as well as the examples discussed in \cite{Kamenev}. It is also close in spirit to the discussions of the path integral of the Langevin equation \cite{Arnold,Gonzalez}.

\section{Contributions from BRST ghosts} \label{sec:ghost}

In the last section we proved that given the advanced nature of the $a-r$ propagators, the LTE~\eqref{lte1} is satisfied
to all orders in perturbation theory. The proof also warrants the normalization condition~\eqref{Wc3} and~\eqref{Wc4}, which is a subcase of~\eqref{lte1}. At the end of Sec.~\ref{sec:eft} we briefly mentioned another method  to guarantee the normalization condition by adding BRST ghosts and extending the original bosonic action to a BRST invariant action.
This method is independent on the pole structure.

In this section we show that after taking into account the retarded structure of $r-a$ propagators all ghost contributions identically vanish in the regularization scheme of~\eqref{mm}. See also~\cite{Arnold,Gonzalez} for similar discussions.



 As shown in~\cite{GaoL}, to guarantee KMS invariance,  the BRST extended quadratic action~\eqref{lamn}
 must take the form
\be
\mL_0=\chi_{a\rmi} K_{\rmi\rmj} \chi_{r\rmj} +\f{i}{2}\chi_{a\rmi} G_{\rmi\rmj} \chi_{a\rmj} - c_{a\rmi} K_{\rmi\rmj} c_{r\rmj} \ .
\ee
The above equation means that ghosts only have $r$-$a$ and $a$-$r$ types propagators,
which are the same as the corresponding bosonic fields. Furthermore, BRST symmetry~\eqref{omm} implies that
in the BRST-extended Lagrangian terms involving ghosts \textit{must contain at least one factor of $c_a$}. Now consider integrating out the ghost variables in the path integral
\be
e^{W [\phi_1, \phi_2]}  =  \int D \chi_{r,a} D c_{r,a} \, e^{i I_{\rm eff} [\chi_r, \chi_a, c_r, c_a; \phi_r, \phi_a]}   \
\label{left1}
\ee
We can use the same arguments as in last section: (i) all diagrams involving ghosts must contain at least one $a$-$r$ loop, i.e. there exists a loop in which every propagator is an $a$-$r$ type ghost propagator.  (ii) Such an $a$-$r$ loop vanishes
identically as in the discussion around equation~\eqref{ww1}. We thus conclude that to all orders in perturbation theory ghosts do not make any contribution.

\section{Other unitarity constraints and KMS conditions}  \label{sec:uni}

\subsection{Other unitarity constraints}

We now briefly comment on the fulfillment of two other unitarity constraints~\eqref{Wc1} and~\eqref{Wc2}
by effective theory path integral~\eqref{left}.

For~\eqref{Wc1}, it can be readily to show that given~\eqref{fer1} it holds to all loops~\cite{CGL}. Here we present a slightly different argument. Equation~\eqref{fer1} implies that we can rearrange the action as
\begin{equation}
\mL_{\rm eff}=i\sum_{m,n} g_{nm} \Phi^{n}(i\Phi_{a})^{m}
\label{act13}
\end{equation}
where $\Phi = (\phi_r, \chi_r)$ and $\Phi_{a}=(\phi_a, \chi_a)$ collectively denote both source and dynamical fields
and $g_{nm}$ are some {\it real} differential operators. Defining a new field $\tilde{\Phi}_{a}=i\Phi_{a}$ it is clear that
integrating out dynamical fields $\chi_{r,a}$, the resulting $W$ will again have the structure~\eqref{act13} as all propagator and vertices are real (in coordinate space).

Now let us consider~\eqref{Wc2}. We will show that given~\eqref{pos} it holds perturbatively in loop expansion and infinitesimal sources.
For this purpose let us write $\sL_{\rm eff}$ as
\be \label{onm}
\sL_{\rm eff} = \sL_{\rm saddle} [\phi_r, \phi_a] + \sL_{\rm fluct} [\chi_r, \chi_a; \phi_r, \phi_a]\ .
\ee
In writing down~\eqref{onm} we have assumed that the Lagrangian density has been expanded around  whatever saddle-point solution one is interested in, i.e. $\sL_{\rm saddle}$ is the expression of $\sL_{\rm eff}$ evaluated at the solution, and  $\chi_{r,a}$ should be considered as deviations around the solution.

 We thus have
 \be
 W = W_{\rm tree} + W_{\rm loop} [\phi_r, \phi_a] , \qquad W_{\rm tree} = i \int d^d x \, \sL_{\rm saddle} [\phi_r, \phi_a]
 \ee
 where $\sL_{\rm saddle}$ gives the tree-level part of $W$ while $W_{\rm loop}$ is obtained by integrating out fluctuations of dynamical fields
 $\chi_{r,a}$
 \be
e^{W_{\rm loop}} = \int D \chi_r D \chi_a \, e^{i \int d^d x \, \sL_{\rm fluct} } \ .
\ee
From the usual loop counting argument
\be
\sL_{\rm eff} \sim {1 \ov \hbar_{\rm eff}} \quad \implies \quad W_{\rm tree}  \sim {1 \ov \hbar_{\rm eff}}, \qquad
W_{\rm loop} \sim O(1) + O(\hbar_{\rm eff}) + \cdots
\ee
where $\hbar_{\rm eff}$ is the effective loop counting parameter and is taken to be small in perturbation theory. Thus in perturbation theory $W_{\rm tree}$ always dominates over $W_{\rm loop}$.

Now $W_{\rm tree}$ can be further expanded in powers of $\phi_r, \phi_a$ which we take to be infinitesimal, and perturbatively the non-positivity of ${\rm Re} W_{\rm tree}$ is dictated by that of the quadratic terms $W^{(2)}_{\rm tree}$ as follows. Writing
\be \begin{gathered}
{\rm Re} \, W^{(2)}_{\rm tree} =\int d^d x_1 d^d x_2\, A_2(x_1,x_2)\phi_a(x_1)\phi_a(x_2)\\ {\rm Re} \, W^{(3)}_{\rm tree} = \int d^d x_2 d^d x_2 d^d x_3 \, A_3(x_1,x_2,x_3)\phi_a(x_1)\phi_a(x_2)\phi_r(x_3)\ ,\end{gathered}
\ee
then to quartic order we can rewrite  ${\rm Re} \, W_{\rm tree}$ by  ``completing the square'' as
\be \begin{gathered}{\rm Re} \, W_{\rm tree}
= \int d^d x_1 d^d x_2\,A_2(x_1,x_2)\left(\phi_a(x_1)+\int d^d x_3 d^d x_4\, H(x_1,x_3,x_4)\phi_a(x_3)\phi_r(x_4)\right)\\
\times \left(\phi_a(x_2)+\int d^d x_5 d^d x_6 \, H(x_2,x_5,x_6)\phi_a(x_5)\phi_r(x_6)\right)+ R_4\ ,
\end{gathered}\ee
where $R_4$ is quartic in $\phi$'s and $H$ is obtained by convoluting the inverse of $A_2$ with $A_3$. Higher powers in $\phi$'s can be treated similarly.

Then we are left to show ${\rm Re} \, W^{(2)}_{\rm tree} $ is non-positive. For this purpose, let us note that the quadratic action has the general form
\be S_0=\chi_a^\dag K_1 \chi_r+\frac i2 \chi_a^\dag G_1 \chi_a+\phi_a^\dag K_2 \chi_r+i\phi_a^\dag G_2 \chi_a+\phi_r^\dag K_3 \chi_a + \phi_a^\dag K_4 \phi_r+\frac i2 \phi_a^\dag G_3 \phi_a\ ,\ee
where e.g. $\chi_a^\dag K_1 \chi_r=\int d^{d-1} kd\omega\, \chi_{ai}^*(\omega,\vec k)K_{1ij}(\omega,\vec k) \chi_{rj}(\omega,\vec k)$. The tree-level contribution to $W^{(2)}_{\rm tree}$ comes from evaluating $S_0$ on the saddle point
\be \chi_a=-K_1^{-\dag}K_2^\dag\phi_a,\quad \chi_r=-K_1^{-1}K_3^\dag\phi_r+i K_1^{-1}(G_1 K_1^{-\dag}K_2^\dag-G_2)\phi_a\ ,\ee
giving
\be {\rm Re} \, W^{(2)}_{\rm tree}=-\frac 12 \phi_a^\dag(K_2K_1^{-1}G_1 K_1^{-1 \dag}K_2^\dag+G_3-2G_2K_1^{-1 \dag}K_2^\dag)\phi_a\ . \ee
Now with $\text{Im}\, S_0\geq 0$ gives $ {\rm Re} \, W^{(2)}_{\rm tree} \leq 0$.

\subsection{KMS conditions}

The dynamical KMS symmetry~\eqref{lkms} can be considered as a mathematical definition
of systems in local equilibrium, i.e. a general initial state $\rho_0$ can be considered as describing a local equilibrium system
only if the corresponding EFT possesses the dynamical KMS symmetry.

A special case is when $\rho_0$ is a thermal density matrix, for which the corresponding EFT describes the dynamics of (nonlinear) disturbances around thermal equilibrium. In this case, correlation functions satisfy in addition the Kubo-Martin-Schwinger (KMS) relations, which when combined with a time reversal $\Th$ (as that defined in~\eqref{dkms}), can be succinctly expressed as a $Z_2$ symmetry of the generating functional~\eqref{gen0}~\cite{CGL}
\be \label{okms}
W[\phi_r,\phi_{a}] = W[\tilde{\phi}_r,\tilde{\phi}_{a}]
\ee
where (we take the classical limit for simplicity)
\be \label{skms}
\Th \tilde \phi_r (x) =  \phi_r (x), \qquad \Th \tilde \phi_a (x) = \phi_a (x) + i \b_0 \p_t \phi_r (x)
\ee
and $\b_0$ is again the inverse equilibrium temperature. In the presence of external sources, we require the effective action $I_{\rm eff} [\chi_r, \chi_a; \phi_r, \phi_a]$ to be invariant under both~\eqref{dkms} and~\eqref{skms}.  Note that invariance under~\eqref{skms} constrains the structure of contact terms of sources in the effective action. These contact terms are important, e.g. they contribute to susceptibilities and Kubo formulas.

Eq.~\eqref{okms} will be automatically satisfied by~\eqref{left} as
\bea
e^{W[\phi_r,\phi_{a}]} &\equiv & \int D\chi_r D\chi_{a}\; e^{iI_{\rm eff} [\chi_r,\chi_{a};\phi_r,\phi_{a}]}=\int D\chi_r D\chi_{a} \, e^{iI_{\rm eff} [\tilde{\chi}_r,\tilde{\chi}_{a};\tilde{\phi}_r,\tilde{\phi}_{a}]} \cr
&= & \int D\tilde{\chi}_r D\tilde{\chi}_{a} \, e^{iI_{\rm eff} [\tilde{\chi}_r,\tilde{\chi}_{a};\tilde{\phi}_r ,\tilde{\phi}_{a}]}=e^{W[\tilde{\phi}_r,\tilde{\phi}_{a}]} \label{kmsproof}
\eea
where the change of integration measure from $\chi_r$ and $\chi_a$ to $\tilde{\chi}_r$ and $\tilde{\chi}_a$ has determinant one.
The manipulations in~\eqref{kmsproof} hold provided that the regularization procedure one uses to make the path integrals finite is compatible with transformations~\eqref{dkms} and~\eqref{skms}. Indeed this was a main motivation to use the regularization procedure~\eqref{mm}. In contrast, a hard frequency cutoff will not be compatible, and one will need to include non-KMS invariant counter terms as can be readily checked in explicit examples.

\section{Discussion and conclusions} \label{sec:conc}

Let us summarize the main results of the paper. The largest time equation is a consequence of unitarity and implies causality. It is  a key constraint on path integrals along a CTP contour. Any effective field theory must respect the LTE. In this paper, we proved a theorem showing that if the propagators of dynamical fields of the effective action have the proper pole structure, LTE is obeyed to all loop orders. 
Using the same arguments we also showed that all ghost contributions are trivial. We should emphasize that dynamical KMS invariance was not directly used to
prove the LTE. It was used to guarantee that the $r$-$a$
propagators have the retarded structure. If the retarded  property arises from other requirements,
which we expect to be the case even for theories not in local thermal
equilibrium,  the LTE can still be proved.

We reached the conclusion by using a specific regularization procedure~\eqref{mm} which has the advantages of maintaining the retarded structure of $K_{ij}^{-1}$ and being compatible with dynamical KMS transformation.
On general grounds one expects that, if some other regularization scheme is used, one will reach the same conclusion up to possible  local counter terms to be added to the bare Lagrangian and/or local contact terms in correlation functions.
We have checked that this is indeed the case for sharp cutoff in frequency integrals. 

Finally we should mention that ghosts and supersymmetry are still useful if one prefers to use other type of regulators which break the retarded structure of the $r$-$a$ propagators or dynamical KMS symmetry.\footnote{It should be kept in mind that supersymmetry only imposes one particular constraint from the dynamical KMS symmetry. One still has to make sure the rest
is properly imposed~\cite{GaoL}.} They will help to ensure the normalization condition and part of the dynamical KMS symmetry to be
manifestly preserved.


\vspace{0.2in}   \centerline{\bf{Acknowledgements}} \vspace{0.2in}
We would like to thank Kristan Jensen for suggestions on the title and for conversations. We would also like to thank Derek Teaney for valuable clarifications on existing literature, and Misha Stephanov and Amos Yarom for conversations. This work is supported by the Office of High Energy Physics of U.S. Department of Energy under grant Contract Number  DE-SC0012567. P. G. was supported by a Leo Kadanoff Fellowship.

\end{document}